\documentclass[letterpaper, 10 pt, conference]{ieeeconf}  
\usepackage{xcolor}

\IEEEoverridecommandlockouts                              
\usepackage[colorinlistoftodos,prependcaption, linecolor=red,backgroundcolor=white,bordercolor=white, textcolor=red, textsize=small]{todonotes}

\usepackage{amssymb}
\let\cssconfproof\proof
\let\cssconfendproof\endproof
\let\proof\relax
\let\endproof\relax
\usepackage{amsthm}
\let\proof\cssconfproof
\let\endproof\cssconfendproof
\usepackage{amsmath}
\usepackage[separate-uncertainty = true,multi-part-units=single]{siunitx}
\usepackage{xfrac}
\usepackage{cancel}
\usepackage{bm}
\usepackage{mathtools}

\usepackage[font=footnotesize, caption=false]{subfig}

\usepackage{xcolor}
\usepackage{hyperref}
\usepackage[capitalise]{cleveref} 
\usepackage{autonum}

\usepackage[noadjust]{cite}

\usepackage{setspace}
\usepackage{pgf}

\usepackage{xparse}

\NewDocumentCommand\bbm{}{ \begin{bmatrix} }
\NewDocumentCommand\ebm{}{ \end{bmatrix} }
\NewDocumentCommand\bpm{}{ \begin{pmatrix} }
\NewDocumentCommand\epm{}{ \end{pmatrix} }

\NewDocumentCommand\Real{}{ \mathbb{R} }

\newif\ifcomments
\commentsfalse
\newcommand{\ah}[1]{%
        \ifcomments
            {\color{red}{(#1)}}%
        \else
            \relax
        \fi
}

\newcommand{\new}[1]{{\color{black}{#1}}}

\NewDocumentCommand\Mean{m}{\pmb{\mu}^{m}}

\NewDocumentCommand\Cov{m}{\pmb{\Sigma}^{m}}

\newcommand{\III}[1]{{\left\vert\kern-0.25ex\left\vert\kern-0.25ex\left\vert #1 \right\vert\kern-0.25ex\right\vert\kern-0.25ex\right\vert}}
\newcommand{\IIIf}[1]{{\vert\kern-0.25ex\vert\kern-0.25ex\vert #1 \vert\kern-0.25ex\vert\kern-0.25ex\vert}}
\renewcommand{\phi}{\varphi}

\renewcommand{\epsilon}{\varepsilon}

\newcommand{\tran}{\mathrm{T}}

\DeclareMathOperator{\diag}{diag}

\renewcommand{\vec}[1]{\bm{#1}} 
\newcommand{\mat}[1]{\bm{#1}} 

\newcommand{\dotGen}[2]{#2^{(#1)}}
\newcommand{\dotThree}[1]{\dotGen{3}{#1}}

\newcommand{\regState}{\vec{x}}
\newcommand{\regInput}{\vec{u}}
\newcommand{\regDisInput}{\regInput_k}
\newcommand{\regExtInput}{\bar{\regInput}} 

\newcommand{\mpcQ}{\mat{Q}}
\newcommand{\mpcR}{\mat{R}}
\newcommand{\mpcHorizon}{T}

\newcommand{\flatOutput}{\vec{y}} 
\newcommand{\flatOutputComp}[1]{\prescript{}{#1}{y}}
\newcommand{\flatState}{\vec{z}}
\newcommand{\flatInput}{\vec{v}}

\newcommand{\flatInputTrans}{\vec{\Gamma}}

\newcommand{\flatStateTrans}{\vec{\Phi}}
\newcommand{\flatOutputDef}{\vec{\Lambda}}

\newcommand{\gpFlatInput}[1]{\prescript{}{#1}{v}}
\newcommand{\gpFunToLearn}[1]{\prescript{}{#1}{\Psi}}
\newcommand{\gpFunToLearnVec}{\vec{\Psi}}
\newcommand{\gpInputData}{\vec{a}}

\newcommand{\kernel}{\bar{k}}
\newcommand{\kernAlpha}{k_{\alpha}}
\newcommand{\kernBeta}[1]{k_{\beta, #1}}

\newcommand{\gpMean}[1]{\prescript{#1}{}{\hspace{-1pt}\mu}}
\newcommand{\gpCov}[1]{\prescript{#1}{}{\hspace{-1pt}\sigma^{2}}}
\newcommand{\gpStddev}[1]{\prescript{#1}{}{\sigma}}

\newcommand{\gamOne}[1]{\prescript{#1}{}{\gamma_{1}}}
\newcommand{\gamTwo}[1]{\prescript{#1}{}{\vec{\gamma}_{2}}}
\newcommand{\gamThree}[1]{\prescript{#1}{}{\gamma_{3}}}
\newcommand{\gamFour}[1]{\prescript{#1}{}{\vec{\gamma}_{4}}}
\newcommand{\gamFive}[1]{\prescript{#1}{}{\mat{\gamma}_{5}}}

\newcommand{\gpTargetVec}{\vec{\hat{\Psi}}}
\newcommand{\gpNData}{N} 

\newcommand{\flatInpOptComp}[1]{\prescript{}{#1}{v}_k^*}
\newcommand{\flatStateOpt}{\flatState_k^*}
\newcommand{\flatInputOpt}{\flatInput_k^*}
\newcommand{\regDisExtInp}{\regExtInput_k} 

\newcommand{\gamOneOpt}[1]{{\gamOne{#1}^{\hspace{-0.25em}*}}}
\newcommand{\gamTwoOpt}[1]{{\gamTwo{#1}^{\hspace{-0.25em}*}}}
\newcommand{\gamThreeOpt}[1]{{\gamThree{#1}^{\hspace{-0.25em}*}}}
\newcommand{\gamFourOpt}[1]{{\gamFour{#1}^{\hspace{-0.25em}*}}}
\newcommand{\gamFiveOpt}[1]{{\gamFive{#1}^{\hspace{-0.25em}*}}}

\newcommand{\socpOptVar}{\tilde{\vec{u}}} 
\newcommand{\socA}[1]{\bar{\mat{A}}_{#1}}
\newcommand{\socB}[1]{\bar{\vec{b}}_{#1}}
\newcommand{\socC}[1]{\bar{\vec{c}}_{#1}}
\newcommand{\socD}[1]{\bar{d}_{#1}}

\newcommand{\socBComp}[2]{\bar{\vec{b}}_{#1, #2}}

\newcommand{\choleskyGamFive}[1]{\prescript{#1}{}{\mat{L}^*}}

\newcommand{\flatDisState}[1]{\flatState_{#1}}
\newcommand{\flatDisInput}[1]{\flatInput_{#1}}
\newcommand{\flatDisStateRef}[1]{{\flatState_{#1}^{\footnotesize \text{ref}}}}
\newcommand{\flatDisInputRef}[1]{{\flatInput_{#1}^{\footnotesize \text{ref}}}}
\newcommand{\trackingError}[1]{\vec{e}_{#1}}
\newcommand{\ljapunovFun}[1]{V(#1)}
\newcommand{\stabPmat}{\mat{P}}

\newcommand{\lqrGain}{\mat{K}}

\newcommand{\dynAdis}{\mat{A}_d} 
\newcommand{\dynBdis}{\mat{B}_d}

\newcommand{\flatNomInput}{\flatInput_k^{nom}}
\newcommand{\flatNomInpComp}[1]{{\prescript{}{#1}{v}_k^{nom}}}

\newcommand{\stabWone}{\vec{w}_1}

\newcommand{\stabWthree}{w_3}

\newcommand{\stabWtwoMat}{\mat{W}_2}
\newcommand{\stabWtwoMatComp}[1]{\prescript{}{#1}{W_2}}
\newcommand{\stabWfour}{\vec{w}_4}
\newcommand{\stabWfourComp}[1]{\prescript{}{#1}{w_4}}
\newcommand{\lipschitzConst}[1]{L_{#1}}

\newcommand{\stabProb}[1]{\delta_{#1}}
\newcommand{\stabBetaSqrt}[1]{\beta_{#1}^{1/2}}
\newcommand{\stabFunComp}{\prescript{}{i}{g}}

\newcommand{\dynExtState}{\vec{\eta}}

\newcommand{\dynExtSelectionMat}{\mat{C}_{ext}}
\newcommand{\dynExtAmatDis}{\mat{A}_{d, ext}}
\newcommand{\dynExtBmatDis}{\mat{B}_{d, ext}}

\newcommand{\meanState}[1]{\vec{\mu}_{\flatState, #1}}
\newcommand{\covState}[1]{\vec{\Sigma}_{\flatState, #1}}

\newcommand{\dynBdisComp}[1]{\vec{b}_{d, #1}}

\newcommand{\stateSelVec}{\vec{h}_j}
\newcommand{\stateSelVecPart}[1]{\prescript{}{#1}{\vec{h}_j}}
\newcommand{\stateLimit}{b_j}
\newcommand{\stateTightenedSet}{\mathcal{Z}^{\mathcal{R}}}
\newcommand{\stateW}[1]{w_{s, #1}}

\definecolor{tum blue}{HTML}{0065BD}

\hypersetup{
  colorlinks      = false, 
  allbordercolors = tum blue!20, 
}

\crefformat{equation}{{#2(#1)#3}}
\Crefformat{equation}{{#2(#1)#3}}
\crefname{chapter}{Chapter}{Chapters}
\crefname{section}{Section}{Sections}
\crefname{subsection}{Section}{Sections}
\crefname{appendix}{\S}{\S}

\newtheorem{dfn}{Definition}[section]
\newtheorem{thm}{Theorem}[section]
\newtheorem{lem}{Lemma}[section]
\newtheorem{rem}{Remark}[section]
\newtheorem{asm}{Assumption}[section]

\makeatletter
\def\thm@space@setup{%
  \thm@preskip=1pt plus 1pt minus 1pt
  \thm@postskip=1pt plus 1pt minus 1pt}
\makeatother

\newcommand{\smallvdots}{\raisebox{-2pt}{\scalebox{0.8}{$\vdots$}}}

\title{\LARGE \bf
Exploiting Differential Flatness for Efficient Learning-based Model Predictive Control \new{of Constrained Multi-Input Control Affine Systems} 
}

\author{Tobias A. Farger$^{1,*}$, Adam W. Hall$^{2,*}$, and Angela P. Schoellig$^1$
\thanks{$^*$ These authors contributed equally to the manuscript.}
\thanks{$^1$ Learning Systems and Robotics Lab at the Technical University of Munich, Munich, Germany. }%
\thanks{$^2$ Learning Systems and Robotics Lab at the University of Toronto Institute for Aerospace Studies (UTIAS) and the Vector Institute for Artificial Intelligence, Toronto, Canada}%
\thanks{Email: {\tt\small \{tobias.farger, angela.schoellig\}@tum.de} {\tt\small adam.hall@robotics.utias.utoronto.ca}}%
\thanks{This work was supported by the Robotics Institute Germany under BMBF grant 16ME0997K.}
}

\newcommand{\secVspace}{\vspace{-1mm}}
\begin{document}
\maketitle
\thispagestyle{empty}
\pagestyle{empty}
\bstctlcite{BSTcontrol}
\begin{abstract}
Learning-based control techniques use data from past trajectories to  control systems with uncertain dynamics. 
However, learning-based controllers are often computationally inefficient, limiting their practicality. 
To address this limitation, we propose a learning-based controller that exploits differential flatness, a property of many robotic systems.
Recent research on using flatness for learning-based control either is limited in that it (i) ignores input constraints, (ii) applies only to single-input systems, or (iii) is tailored to specific platforms.
In contrast, our approach uses a system extension and block-diagonal cost formulation to control general multi-input, nonlinear, affine systems.
Furthermore, it satisfies input and half-space flat state constraints and guarantees probabilistic Lyapunov decrease using only two sequential convex optimizations. 
\new{We show that our approach performs similarly to, but is multiple times more efficient than, a Gaussian process model predictive controller in simulation, and achieves competitive tracking in real hardware experiments.}
\end{abstract}

\secVspace
\section{Introduction}
\secVspace
Learning-based control methods use machine learning models to estimate uncertain dynamics using data collected from previous trajectories. 
Recently, such methods have proven to be a less conservative, but more computationally expensive, option for controlling uncertain systems as compared to classical robust control \cite{brunkeSafeLearningRobotics2022}, \new{particularly when accurate first-principles models are difficult to obtain or system parameters vary across operating conditions}.
A common approach uses Gaussian processes (GPs) to model the dynamics within a model predictive control (MPC) framework \cite{hewingCautiousModelPredictive2020}, which necessitates solving a nonlinear program at every time step---making real-time computation challenging.  
One can, however, use structural assumptions about the system dynamics, like differential flatness \cite{fliessFlatnessDefectNonlinear1995}, to improve the efficiency of learning-based controllers. 

Differential flatness, a property of many robotic systems (e.g., quadrotors \cite{greeffFlatnessBasedModelPredictive2018} and flexible-joint manipulators \cite{isidoriNonlinearControlSystems1995}), allows for linear control techniques to be applied to nonlinear systems.
This is because differentially flat systems can be exactly transformed into linear systems via nonlinear input transformations. 
Thus, linear control methods, like linear MPC, can be used to control flat nonlinear systems. 
Such flat model predictive control (FMPC) methods perform similarly to nonlinear model predictive control (NMPC) but require less compute~\cite{greeffFlatnessBasedModelPredictive2018,sunComparativeStudyNonlinear2022}. FMPC, however, cannot enforce input constraints while maintaining convexity and is sensitive to model mismatch in the nonlinear transformation.

\begin{figure}[t]
    \centering
    \includegraphics[width=\linewidth]{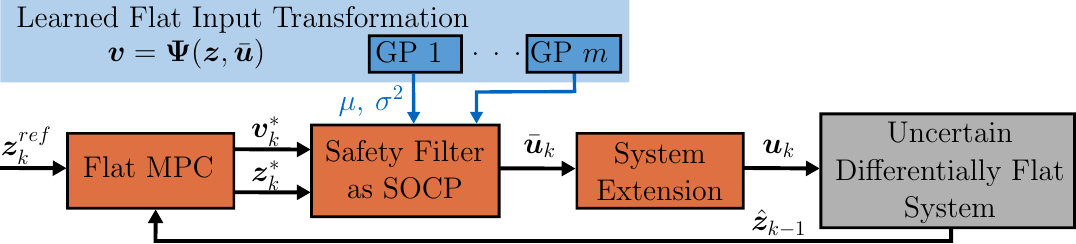}
    \caption{ 
    Our approach consists of a flat model predictive controller (MPC), a safety filter and a system extension.
    The flat MPC tracks the reference $\flatDisStateRef{k}$ and computes the optimal flat input $\flatInputOpt$ and state $\flatStateOpt$. 
    As the nonlinear flat input transformation $\flatInput = \vec{\Psi}(\flatState, \regExtInput)$ is uncertain, \new{it} is modeled with an independent Gaussian processes (GPs) for each component. 
    The second-order cone program (SOCP) safety filter then uses the GPs' mean and covariance predictions $\gpMean{}$ and $\gpCov{}$, along with $\flatStateOpt$ and $\flatInputOpt$, to find the extended input $\regDisExtInp$ that guarantees probabilistic Lyapunov decrease and flat state constraint satisfaction while respecting constraints on the input $\regDisInput$. }
    \label{fig:block_diag}
\end{figure}

To eliminate any model mismatch and incorporate input constraints, recent work has explored learning-based controllers that exploit differential flatness. In \cite{greeffExploitingDifferentialFlatness2021}, a robust linear quadratic regulator is combined with a learned nonlinear input mapping and its inverse for single-input systems. While this approach provides an upper bound on tracking error, it lacks the ability to explicitly handle input or flat state constraints. In contrast, \cite{akbariComputationallyEfficientLearningBased2024} and \cite{akbariTinyLearningBasedMPC2024} introduce a method tailored to multi-rotor systems, capable of handling multiple inputs and respecting input constraints. Although these techniques are efficient, they are limited to multi-rotor systems.


Other recent approaches combine FMPC with a second order cone program (SOCP) safety filter that uses GPs to learn the nonlinear input transformation~\cite{greeffLearningStabilityFilter2021, hallDifferentiallyFlatLearningBased2023}. 
This safety filter complements the FMPC by guaranteeing probabilistic Lyapunov decrease, probabilistic half-space flat state constraint satisfaction, and bounded input constraint satisfaction.
However, these approaches are limited to single-input systems, and their input transformations do not require input derivatives (i.e., for systems with full relative degree \cite{isidoriNonlinearControlSystems1995}).

The contributions of this work are:
\begin{itemize}
    \item A reformulated SOCP safety filter that supports general multi-input, control-affine systems with flat input transformations requiring input derivatives;
    \item A novel Lipschitz-based bound and block-diagonal cost structure enabling probabilistic Lyapunov decrease;
    \item Simulation and hardware validation on quadrotor trajectory tracking, showing competitive performance with GPMPC and NMPC, but with significantly reduced computation time.
\end{itemize}
In contrast to prior work limited to single-input settings or quadrotor-specific formulations, we demonstrate a learning-based, flatness-exploiting MPC that scales to general multi-input systems while satisfying flat-state and input constraints with real-time computation.
As in \cite{hallDifferentiallyFlatLearningBased2023}, our formulation still ensures probabilistic Lyapunov decrease and satisfaction of half-space flat state and input constraints for the next time step.
Although the formulation does not guarantee recursive feasibility, we show through simulation and hardware experiments that the controller remains feasible, stable, and constraint-satisfying in practice.
\new{Providing such guarantees for computationally efficient learning-based controllers is critical for deploying robotic systems that must operate in real time under uncertain dynamics and safety constraints.}

%
\section{Problem Statement}
\secVspace
Consider a nonlinear multi-input control-affine system,
\begin{equation}
    \dot{\regState}(t) = \vec{f}(\regState(t)) + \vec{g}(\regState(t)) \regInput,
    \label{eq:dynamics}
\end{equation}
with initial condition $\regState(0) = \regState_0$. Here $\regState(t) \in \Real^{n_x}$ and $\regInput(t) \in \Real^m$ denote the system state and input, respectively.
The unknown functions $\vec{f}: \Real^{n_x} \rightarrow \Real^{n_x}$ and $\vec{g}: \Real^{n_x} \rightarrow \Real^{n_x \times m}$ are assumed to be sufficiently smooth and \eqref{eq:dynamics} is known to be \textit{differentially flat} with a known \textit{flat output}.

\begin{dfn}[Differential Flatness \cite{levineAnalysisControlNonlinear2009}]
    A system \cref{eq:dynamics} is differentially flat with respect to a flat output $\flatOutput = \flatOutputDef(\regState, \regInput, \dot{\regInput},..., \dotGen{\ell}{\regInput})$ if all system states and inputs can be written as $\regState = \flatStateTrans(\flatOutput, \dot{\flatOutput}, ..., \dotGen{\rho_x}{\flatOutput})$ and $\regInput = \flatInputTrans(\flatOutput, \dot{\flatOutput}, ..., \dotGen{\rho_x + 1}{\flatOutput})$.
\end{dfn}
The input $\regInput$ is subject to constraints $\regInput_{min} \leq \regInput \leq \regInput_{max}$ (all vector inequalities are evaluated component-wise). 
In addition, we consider convex half-space constraints on the flat state $\flatState \in \mathcal{Z} = \{ \flatState | \stateSelVec^\tran \flatState \leq \stateLimit, \forall j=1\ldots n_c \} \subset \Real^{n_z}$, where $\flatState = [\flatOutputComp{1}, ..., \dotGen{\rho_1 -1}{\flatOutputComp{1}}, ...,  \flatOutputComp{m}, ..., \dotGen{\rho_m -1}{\flatOutputComp{m}}] ^\tran$, and $n_z = \sum_i (\rho_i-1)$.
We also have that $\rho_x = \max_i(\rho_i-1)$. 
The subscript $i$ at the bottom left indicates component $i$ of a vector and $\dotGen{p}{(\cdot)}$ denotes the derivative of order $p$.
\begin{rem}
    Although we only consider constraints on the flat states $\flatState$, these are often useful quantities like position, velocity, acceleration, and attitude.
\end{rem}
The objective is to design a computationally efficient controller for \cref{eq:dynamics} that ensures accurate tracking of a flat state reference $\flatDisStateRef{}$ and satisfies input and flat state constraints. 

%
\secVspace
\section{Background}
\secVspace
\subsection{Discretized Exact Linear Dynamics}
Differential flatness implies exactly linear dynamics in a transformed flat-state space, allowing linear controllers to be designed in the flat space and mapped back to the original nonlinear system.
\begin{lem}[Linearized Flat Dynamics \cite{levineAnalysisControlNonlinear2009,isidoriNonlinearControlSystems1995}]
    Every flat system can be represented as a linear system and a nonlinear transformation
\begin{align}
    \dot{\flatState} &= \mat{A} \flatState + \mat{B} \flatInput \label{eq:linSys} \\
    \flatInput &= \vec{\Psi}(\flatState, \regInput, \dot{\regInput}, ..., \dotGen{\ell}{\regInput}), \label{eqn:flatInputPsi}
\end{align}
where $\flatInput = [\flatOutputComp{1}^{(\rho_1)},\ldots,\flatOutputComp{m}^{(\rho_m)}]^\tran \in \Real^m$ 
is the flat input, and $\mat{A}$ and $\mat{B}$ are such that \cref{eq:linSys} is a system of chained integrators.
\end{lem}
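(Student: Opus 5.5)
The plan is to build the linear system directly from the flat output and its derivatives, and then obtain $\vec{\Psi}$ by differentiating the defining relation $\flatOutput = \flatOutputDef(\regState, \regInput, \dots, \dotGen{\ell}{\regInput})$ along solutions of \cref{eq:dynamics}.

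\textbf{Step 1: the chained integrators.} By the definition of differential flatness, each component $\flatOutputComp{i}$ is a smooth function of $(\regState, \regInput, \dots, \dotGen{\ell}{\regInput})$. The flat state $\flatState$ stacks $\flatOutputComp{i}, \dots, \dotGen{\rho_i-1}{\flatOutputComp{i}}$ for each $i$, and the flat input $\flatInput$ collects the next derivatives $\dotGen{\rho_i}{\flatOutputComp{i}}$. The relation $\dot{\flatState} = \mat{A}\flatState + \mat{B}\flatInput$ then holds by construction. Concretely, $\mat{A} = \diag(\mat{A}_1, \dots, \mat{A}_m)$, where each $\mat{A}_i$ is the shift matrix with ones on the superdiagonal. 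Likewise $\mat{B} = \diag(\vec{e}_{\rho_i})$, where $\vec{e}_{\rho_i}$ is the last unit vector of the $i$-th block. This step is purely notational: the derivative of each coordinate is the next coordinate, and the derivative of the last coordinate of each block is the corresponding flat input component.

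\textbf{Step 2: the transformation $\vec{\Psi}$.} Differentiate $\flatOutputComp{i} = \flatOutputDef_i(\regState, \regInput, \dots, \dotGen{\ell}{\regInput})$ repeatedly in time. At each step, substitute $\dot{\regState} = \vec{f}(\regState) + \vec{g}(\regState)\regInput$ using the chain rule. After $\rho_i$ differentiations this gives
\[
\dotGen{\rho_i}{\flatOutputComp{i}} = \tilde{\Psi}_i(\regState, \regInput, \dots, \dotGen{\ell + \rho_i}{\regInput}),
\]
which is smooth because $\vec{f}$ and $\vec{g}$ are sufficiently smooth. Next, eliminate $\regState$ using $\regState = \flatStateTrans(\flatOutput, \dots, \dotGen{\rho_x}{\flatOutput})$. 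The arguments of $\flatStateTrans$ are derivatives of order at most $\rho_x = \max_i(\rho_i - 1)$. These are components of $\flatState$ whenever order $\rho_x$ does not exceed $\rho_i - 1$ for every $i$. Substituting, and letting $\ell$ now denote the highest input derivative that appears, gives $\flatInput = \vec{\Psi}(\flatState, \regInput, \dots, \dotGen{\ell}{\regInput})$, which is \cref{eqn:flatInputPsi}.

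\textbf{Main obstacle.} The hard step is making sure $\regState$ really depends only on $\flatState$, that is, on derivatives of order below $\rho_i$ in each channel. This requires choosing the $\rho_i$ properly, namely the smallest orders at which some input (or input derivative) enters $\dotGen{\rho_i}{\flatOutputComp{i}}$ nontrivially. It also requires that, in the definition, $\flatStateTrans$ uses derivatives of component $i$ only up to order $\rho_i - 1$.

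For this I would invoke the standard result: for flat systems, there exist indices $\rho_i$ with $\sum_i \rho_i$ equal to the dimension of the (possibly dynamically extended) state, such that the map $\regState \mapsto \flatState$ is a local diffeomorphism onto its image. This follows from the endogenous transformation theory in \cite{levineAnalysisControlNonlinear2009} and from dynamic extension as in \cite{isidoriNonlinearControlSystems1995}. The input derivatives appearing in $\vec{\Psi}$ are exactly the price of that extension. With this in hand, the lemma follows from Steps 1 and 2.
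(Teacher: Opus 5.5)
Your Steps 1 and 2 are the right skeleton, and you correctly identify the crux: every argument of $\flatStateTrans$ must be a coordinate of $\flatState$. The way you close it, however, would fail.

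\textbf{Your rule for the $\rho_i$ is wrong here.} Choosing $\rho_i$ as the smallest order at which an input enters $\flatOutputComp{i}^{(\rho_i)}$, read with respect to the original input, gives the relative degree. That is exactly the wrong choice in the non-full-relative-degree case this paper targets. For the planar quadrotor, $T_c$ already appears in $\ddot{x} = \sin\theta\,(\beta_2 + \beta_1 T_c)$ and in $\ddot{z}$. Your rule therefore gives $\rho = (2,2)$ and $\flatState = (x, \dot{x}, z, \dot{z})$. At a given instant, $\theta$ is a free coordinate, independent of $(x,\dot x, z,\dot z)$ and of the input jet $(\regInput, \dot{\regInput}, \ldots)$. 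So $\ddot{x}$ cannot be written as $\Psi(\flatState, \regInput, \dot{\regInput}, \ldots)$ at all. The correct indices there are $(4,4)$.

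\textbf{Your fallback statement has the direction reversed.} In general $\flatState$ is not a function of $\regState$ alone; already $\ddot{x}$ contains $T_c$. So $\regState \mapsto \flatState$ is not even well defined. What Step 2 needs is the converse: $\regState$ must be a smooth function of $\flatState$. Equivalently, the \emph{extended} state map $(\regState, \dynExtState) \mapsto \flatState$ should be a local diffeomorphism.

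\textbf{No external theorem is needed.} The lemma only asserts that suitable $\rho_i$ exist. The definition of flatness already gives $\regState = \flatStateTrans(\flatOutput, \dots, \dotGen{\rho_x}{\flatOutput})$ with a finite order. Let $\rho_i - 1$ be the highest order of $\flatOutputComp{i}$ that actually appears in $\flatStateTrans$. Alternatively, simply set $\rho_i = \rho_x + 1$ for all $i$, which matches the paper's convention $\rho_x = \max_i(\rho_i - 1)$. Then $\regState = \flatStateTrans(\flatState)$ holds verbatim. Your Step 2 yields $v_i = \tilde{\Psi}_i\bigl(\flatStateTrans(\flatState), \regInput, \dots, \dotGen{\ell+\rho_i}{\regInput}\bigr)$, which proves the lemma. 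For the quadrotor, $\dot\theta$ involves $x^{(3)}$ and $z^{(3)}$, so this recipe returns exactly $\rho = (4,4)$ and the paper's $\flatState$.

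\textbf{Comparison with the paper.} The paper gives no proof; it only cites L\'evine's endogenous-feedback linearization and Isidori's dynamic extension. That heavier theory additionally gives minimality of $\flatState$, i.e., a diffeomorphism with the extended state. The paper needs that later for the system extension and the affine form of $\vec{\Psi}$, but not for this statement.
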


In this work, we discretize the flat system \cref{eq:linSys} such that
\begin{equation}
    \flatDisState{k+1} = \dynAdis \flatDisState{k} + \dynBdis \flatDisInput{k}, 
    \label{eq:flatSysDiscretized}
\end{equation}
where $k$ is the discrete time index, $\flatDisState{k}$ is the discretized flat state, $\flatDisInput{k}$ is the discretized flat input, and $\dynAdis$ and $\dynBdis$ are exact discretizations of $\mat{A}$ and $\mat{B}$, respectively. 
Since \cref{eq:linSys} decomposes into $m$ independent integrator chains, $\dynAdis$ is block diagonal and $\dynBdis = \diag (\dynBdisComp{1}, ..., \dynBdisComp{m})$ with $\dynBdisComp{i} \in \Real^{\rho_i}$.
Although this discretization introduces a sampling error,
we assume that, if the sampling time is sufficiently small, this error can be easily accounted for by constraint tightening.
Such discretizations are standard in the flat control literature~\cite{akbariComputationallyEfficientLearningBased2024, akbariTinyLearningBasedMPC2024, greeffExploitingDifferentialFlatness2021, greeffFlatnessBasedModelPredictive2018, hallDifferentiallyFlatLearningBased2023, sunComparativeStudyNonlinear2022}.

\subsection{Gaussian Processes}
\secVspace
We use GPs to model distributions over unknown but observable functions $\Psi(\gpInputData): \Real^{\text{dim}(\gpInputData)} \rightarrow \Real$. 
Prior knowledge about $\Psi(\gpInputData)$ is encoded via a prior mean and parameterized kernel function. 
Then, given a noisy data set $\hat{\Psi}(\gpInputData) = \Psi(\gpInputData) + \epsilon$ with $\epsilon \sim \mathcal{N}(0,\sigma_n^2)$, the GP's hyperparameters are regressed to maximize the log-likelihood of the marginal distribution over the sampled data~\cite{rasmussenGaussianProcessesMachine2006}.

For a query point $\gpInputData^*$ the posterior prediction conditioned on the data $\mathcal{D} = \{ (\gpInputData_i, \hat{\Psi}(\gpInputData_i)) \}_{i = 1}^{\gpNData}$ is a normal distribution $\Psi(\gpInputData^*) | \mathcal{D} = \mathcal{N}( \gpMean{}(\gpInputData^*), \gpCov{} (\gpInputData^*))$, where the posterior mean $\gpMean{}$ and covariance $\gpCov{}$ are computed as $\gpMean{}(\gpInputData^*) = \vec{k}(\gpInputData^*) \mat{K}^{-1} \gpTargetVec$ and $\gpCov{}(\gpInputData^*) = \kernel(\gpInputData^*, \gpInputData^*) - \vec{k}(\gpInputData^*) \mat{K}^{-1} \vec{k}(\gpInputData^*) ^\tran$, where $\kernel$ is the kernel of the GP, $\vec{k}(\gpInputData^*) = [\kernel(\gpInputData^*, \gpInputData_1), ..., \kernel(\gpInputData^*, \gpInputData_\gpNData)]$, $\mat{K}_{i, j} = \kernel(\gpInputData_i, \gpInputData_j) + \sigma_n^2 \delta_{i,j}$, where $\delta_{i,j}$ is the Kronecker delta, and $\gpTargetVec = [\hat{\Psi}(\gpInputData_1), ..., \hat{\Psi}(\gpInputData_\gpNData)]$. 

\subsection{Discrete-Time Lyapunov Function}
\secVspace
Consider the discrete-time dynamics \cref{eq:flatSysDiscretized} and a smooth reference  $\flatDisStateRef{k}, \flatDisInputRef{k}$. The tracking error with respect to this reference is defined as $\trackingError{k} = \flatDisState{k} - \flatDisStateRef{k}$. 
We have the following standard stability result~\cite[Thm. 5.5]{ogataDiscretetimeControlSystems1998}.
\begin{lem}[Discrete-Time Lyapunov Decrease]
Consider the feedback $\flatDisInput{k} = - \lqrGain \trackingError{k} + \flatDisInputRef{k} $ and the resultant feedback error dynamics $\trackingError{k+1} = ( \dynAdis - \dynBdis \lqrGain) \trackingError{k}$.
If there exists a function $V : \Real^{n_z} \rightarrow \Real_{\geq 0}$ that satisfies 
  \begin{equation}
    \begin{aligned}
    &\ljapunovFun{\vec{0}} = 0, \; \ljapunovFun{\trackingError{k}} > 0, \; 
    \ljapunovFun{\trackingError{k+1}} < \ljapunovFun{\trackingError{k}},
  \end{aligned}
  \label{eq:ljapunovDecreasing}
  \end{equation}
    $\forall \trackingError{k} \in \Real^{n_z} \setminus \{\mathbf{0}\}$, then the closed-loop error dynamics are asymptotically stable with respect to the origin.
\end{lem}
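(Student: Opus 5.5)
The plan is to reduce the statement to the standard discrete-time Lyapunov argument, working directly with the closed-loop error dynamics $\trackingError{k+1} = (\dynAdis - \dynBdis \lqrGain)\trackingError{k}$. These dynamics are obtained by substituting the feedback $\flatDisInput{k} = -\lqrGain\trackingError{k} + \flatDisInputRef{k}$ into \eqref{eq:flatSysDiscretized}, using the fact that the reference satisfies $\flatDisStateRef{k+1} = \dynAdis\flatDisStateRef{k} + \dynBdis\flatDisInputRef{k}$. Since the error map is linear and time-invariant, I would argue in two steps: first Lyapunov stability, then attractivity.

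\textbf{Lyapunov stability.} I would use that $V$ is continuous (smooth, as is implicit for a Lyapunov candidate), positive definite, and vanishes at the origin. For $\epsilon>0$, set $\alpha = \min_{\|\vec{e}\|=\epsilon} V(\vec{e}) > 0$ and choose $\delta<\epsilon$ such that $V(\vec{e})<\alpha$ whenever $\|\vec{e}\|<\delta$. The strict decrease $\ljapunovFun{\trackingError{k+1}} < \ljapunovFun{\trackingError{k}}$ gives $V(\trackingError{k}) \le V(\trackingError{0}) < \alpha$ for all $k$. The trajectory can jump, so I cannot simply argue it never crosses the sphere $\|\vec{e}\|=\epsilon$. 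Instead I would use linearity: $\|\trackingError{k+1}\| \le \|\dynAdis-\dynBdis\lqrGain\|\,\|\trackingError{k}\|$, which bounds each step, and apply the sublevel-set argument on a slightly larger ball.

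\textbf{Attractivity.} The sequence $V(\trackingError{k})$ is nonincreasing and bounded below by $0$, so it converges to some $c \ge 0$. The trajectory stays in a compact sublevel region, so an $\omega$-limit point $\bar{\vec{e}}$ exists. By continuity of $V$ and of the linear map, both $\bar{\vec{e}}$ and its successor $(\dynAdis-\dynBdis\lqrGain)\bar{\vec{e}}$ have $V$-value $c$. This contradicts strict decrease unless $\bar{\vec{e}}=\vec{0}$. Hence $c=V(\vec{0})=0$, and positive definiteness yields $\trackingError{k}\to\vec{0}$.

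\textbf{Main obstacle.} The hardest step is the boundedness of sublevel sets, which is needed for compactness. I would handle it through linearity rather than by assuming radial unboundedness. If the closed-loop matrix had an eigenvalue of modulus $\ge 1$, pick $\vec{e}$ in the corresponding real invariant subspace; on this subspace the orbit stays away from the origin. Applying the limit argument above on that compact invariant set, where $V$ attains its infimum on the orbit closure, contradicts strict decrease. Therefore the spectral radius is $<1$, which gives exponential, and in particular asymptotic, stability. Alternatively, one can simply invoke \cite[Thm.~5.5]{ogataDiscretetimeControlSystems1998}.
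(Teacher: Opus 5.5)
The paper gives no argument of its own here. It only cites the standard discrete-time Lyapunov theorem \cite{ogataDiscretetimeControlSystems1998}, whose hypotheses include continuity of $V$ (and radial unboundedness for the global version). Your stability and attractivity steps are a self-contained version of that textbook proof, specialized to the linear map $M=\dynAdis-\dynBdis\lqrGain$. That is a legitimate route and more informative than the citation. The continuity you add is essential, not merely implicit: as literally stated, the lemma is false. The scalar closed loop $e_{k+1}=2e_k$ with $V(e)=1/|e|$ for $e\neq 0$ and $V(0)=0$ meets every listed condition, yet it is unstable. For the quadratic $V(e)=e^{\top}Pe$ that the paper actually uses, the omission is harmless.

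\textbf{The gap: the spectral argument for $|\lambda|>1$.} The step you flag as the main obstacle is where your argument fails. Suppose $M$ has an eigenvalue with $|\lambda|>1$. On the associated real invariant subspace $W$, $M$ acts as $|\lambda|$ times a rotation in a suitable norm, so $\|M^k w\|_W=|\lambda|^k\|w\|_W\to\infty$. Consequently:
\begin{itemize}
\item there is no compact invariant set in $W$ other than $\{0\}$;
\item the orbit has no accumulation points, so its closure is the orbit itself;
\item $V$ strictly decreases along the orbit, so its infimum there is never attained.
\end{itemize}
Your contradiction therefore covers only $|\lambda|=1$.

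Similarly, the claim that ``the trajectory stays in a compact sublevel region'' in the attractivity step is unjustified. Sublevel sets of a continuous positive definite $V$ may be unbounded.

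\textbf{The repair.} You need nothing new; make your stability step precise:
\begin{itemize}
\item set $\varepsilon'=\varepsilon/\max(1,\|M\|)$;
\item set $\alpha=\min_{\varepsilon'\le\|e\|\le\varepsilon}V(e)>0$;
\item pick $\delta<\varepsilon'$ such that $V<\alpha$ on the $\delta$-ball.
\end{itemize}
Now argue by induction with $\|e_0\|<\delta$. Each step satisfies $\|e_{k+1}\|\le\|M\|\,\|e_k\|<\varepsilon$, so it cannot leave the $\varepsilon$-ball. Since $V(e_{k+1})<\alpha$, it also cannot land in the annulus. Hence $\|e_k\|<\varepsilon'$ for all $k$.

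This bounded trajectory is exactly the compactness your $\omega$-limit argument needs. It yields local asymptotic stability, which is what the lemma claims; boundedness of sublevel sets is never needed.

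If you also want $\rho(M)<1$, it now follows at once. For an eigenvalue with $|\lambda|\ge 1$, take a nonzero $w\in W$ with $\|w\|<\delta$. Then $\|M^k w\|_W\ge\|w\|_W$, so its orbit cannot converge to $0$, a contradiction. Alternatively, for $|\lambda|>1$, the backward iterates $M^{-k}w\to 0$ have strictly increasing $V$-values, which contradicts continuity of $V$ at the origin.
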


%
\secVspace
\section{Methodology}
\secVspace
Our proposed architecture (\cref{fig:block_diag}) builds upon \cite{hallDifferentiallyFlatLearningBased2023}. The control approach consists of an FMPC, a safety filter, and a system extension. 
The FMPC (\cref{sec:FMPC}) computes the optimal flat state and input to best track $\flatDisStateRef{}$ while respecting flat-state constraints. 
As the nonlinear input transformation is unknown, we learn it offline from data using an individual GP for each component of $\flatInput$ (\cref{sec:GPaffine}). 
The safety filter (\cref{sec:safetyFilter}) then uses the GPs to obtain the extended system input $\regDisExtInp$ while satisfying probabilistic Lyapunov decrease, probabilistic half-space flat state constraints, and constraints on the extended input $\regDisExtInp$ and system input $\regDisInput$. 
Finally, we obtain $\regDisInput$ from $\regDisExtInp$ in the system extension (\cref{sec:systemExtension}). 
For control-affine systems, the safety filter can be written as an SOCP. As the FMPC is a quadratic program, both optimization problems are convex and can be solved efficiently.
While recursive feasibility is not theoretically guaranteed, we show in our experiments that, pragmatically, feasibility is maintained.

\secVspace
\subsection{Flat Model Predictive Control}
\secVspace
\label{sec:FMPC}

FMPC solves a discrete-time, finite-horizon convex optimal control problem (OCP). The formulation we introduce here builds on \cite{greeffFlatnessBasedModelPredictive2018} and is equivalent to the one presented in \cite{hallDifferentiallyFlatLearningBased2023}. 
We consider the cost $(\flatDisState{k}-\flatDisStateRef{k})^\tran \mpcQ (\flatDisState{k}-\flatDisStateRef{k}) + \flatDisInput{k}^\tran \mpcR \flatDisInput{k}$, where $\mpcQ \in \Real^{n_z \times n_z}$, $\mpcQ \succ 0$, $\mpcR \in \Real^{m \times m}$, $\mpcR \succ 0$, $\mpcQ$ is block diagonal, and $\mpcR$ is diagonal. 
\new{The FMPC at timestep $k$ receives the estimated flat state $\hat{\flatState}_k$, has a horizon of $\mpcHorizon \in \mathbb{N}$ and uses the first flat state $\flatStateOpt$ and input $\flatInputOpt$ of its horizon as inputs to the safety filter (Section \ref{sec:safetyFilter}).}
The OCP ensures $\flatStateOpt \in \mathcal{Z}$ along the prediction horizon. 
We note that, under the same assumptions as in \cite[Sec. IV.A]{hallDifferentiallyFlatLearningBased2023}, Theorem 6.20 from \cite{gruneNonlinearModelPredictive2017} implies that for a sufficiently large $\mpcHorizon$, the FMPC renders \eqref{eq:flatSysDiscretized} asymptotically stable with respect to the reference $\flatState^{ref}$.
Furthermore, from \cite[Sec. 8.3]{borrelliPredictiveControlLinear2017}, there is a closed form solution with equivalent gain matrix $\lqrGain \in \Real^{m \times n}$
    \begin{equation}
        \flatInputOpt = - \lqrGain (\flatDisState{k} - \flatDisStateRef{k}) + \flatDisInputRef{k},
        \label{eq:equivalentGainMPC}
    \end{equation}
in the absence of constraints $\mathcal{Z}$.
Here, $\lqrGain$ can be computed from the finite horizon discrete-time Ricatti equation (DTRE).
\begin{rem}\label{rem:MPC_Ricatti_equivalence}
Note that due to the block diagonal structure we impose on $\mpcQ$ and $\mpcR$, we have that $\stabPmat$ from the solution of the DTRE will also be block diagonal.
\end{rem}

\secVspace
\subsection{System Extension}
\secVspace
\label{sec:systemExtension}
To handle flat input transformations requiring input derivatives---extending \cite{hallDifferentiallyFlatLearningBased2023}---we consider a discretized \textit{dynamic extension} \cite{isidoriNonlinearControlSystems1995},
\begin{equation}
    \dynExtState_k = \dynExtAmatDis \dynExtState_{k-1} + \dynExtBmatDis \regExtInput_k,  \quad \regDisInput = \dynExtSelectionMat \dynExtState_k.
    \label{eq:discreteExtensionDynamics}
\end{equation}
Here, $\dynExtState_k$ includes the components of the input $\regInput_k$ and their derivatives up to the required order, $\regExtInput_k$ is the extended input, and $\dynExtAmatDis$ and $\dynExtBmatDis$ fulfill the exact-discretization chain integrator dynamics.
Note that $\regExtInput_k$ may include both extended and non-extended input components. 
With this system extension \cref{eqn:flatInputPsi} simplifies to
\begin{equation}
    \flatInput_k = \vec{\Psi}(\flatState_k, \regExtInput_k) = \vec{\alpha}(\flatState_k) + \vec{\beta}(\flatState_k) \regExtInput_k.
    \label{eq:PsiLinInU}
\end{equation}
for control-affine systems such as \eqref{eq:dynamics}. 
This transformation $\vec{\Psi}$ is unknown and must be learned from data.

\begin{asm}
    We assume that the extended input $\regExtInput$, the corresponding system extension \cref{eq:discreteExtensionDynamics}, and the dimension of the flat state $\flatState$ for system \cref{eq:dynamics} are known.
\end{asm}
\new{\noindent We note that verifying the structural assumptions of our approach (flatness, relative degree, control-affine structure) requires only qualitative knowledge of the system class (e.g., from first principles or literature) and does not require detailed system identification.}
\ah{we can probably remove this second remark.}
\begin{rem}
    If \eqref{eqn:flatInputPsi} doesn't require input derivatives (i.e., the system has full relative degree) then $\regExtInput = \regInput$ and \cref{eq:PsiLinInU} holds without system extension. 
\end{rem}


%
\secVspace
\subsection{Gaussian Process Learning}
\secVspace
\label{sec:GPaffine}
We propose to learn each component of the nonlinear transformation \cref{eq:PsiLinInU} with an independent GP, as commonly done in GPMPC~\cite{hewingCautiousModelPredictive2020}. Thus, each of the $m$ GPs learns: 
\begin{equation}
    \gpFlatInput{i} = \gpFunToLearn{i}(\flatState, \regExtInput) = \alpha_i(\flatState) + \vec{\beta}_i(\flatState)^\tran \regExtInput,
    \label{eqn:gpFunToLearn}
\end{equation}
where the index $i$ on the bottom left denotes component $i$ of a vector and $\alpha_i$ and $\vec{\beta}_i$ are the components of $\vec{\alpha}$ and $\mat{\beta}$ in \cref{eq:PsiLinInU} corresponding to component $\gpFunToLearn{i}$.
We encode the affine structure of \cref{eqn:gpFunToLearn} directly in the kernel of the GP using an affine kernel, as introduced in \cite{castanedaGaussianProcessbasedMinnorm2021a},
\begin{equation}
\begin{aligned}
    \kernel(\gpInputData_i, \gpInputData_j) &= \kernAlpha(\flatState_i, \flatState_j) \\ + 
    &\regExtInput_i^\tran \, \diag (\kernBeta{1}(\flatState_i, \flatState_j), ...,  \kernBeta{m}(\flatState_i, \flatState_j) )  \, \regExtInput_j
    \label{eq:gpKernel}
\end{aligned}
\end{equation}
where $\gpInputData_i = [\flatState_i,  \regExtInput_i]$ is the input to the GPs. 
\begin{asm}
  $\kernAlpha$ and $\kernBeta{1}, ..., \kernBeta{m}$ are positive definite and bounded kernels.
  \label{asm:kernels}
\end{asm}
Given \cref{asm:kernels}, the kernel \cref{eq:gpKernel} is also positive definite and bounded as shown in \cite[Lemma 3]{castanedaGaussianProcessbasedMinnorm2021a}.
Using the affine kernel \cref{eq:gpKernel} the mean and covariance prediction at a query point $\gpInputData = [\flatState, \regExtInput]$ are computed with
\begin{align}
    \label{eq:gpMeanDefGammas}
    \gpMean{i} &= \gamOne{i}(\flatState) + \gamTwo{i}(\flatState)^\tran \regExtInput \\
    \label{eq:gpCovDefGammas}
    \gpCov{i} &= \gamThree{i}(\flatState) + \gamFour{i}(\flatState)^\tran \regExtInput + \regExtInput^\tran \gamFive{i}(\flatState) \regExtInput.
\end{align}
The derivation of each $\prescript{i}{}{\gamma_j}$ is similar to \cite{hallDifferentiallyFlatLearningBased2023}, but here we consider a multi-dimensional transformation.

\begin{asm}
    For each component $i$ of the nonlinear transformation \cref{eq:PsiLinInU}, the true function $\gpFunToLearn{i}(\gpInputData)$ belongs to a Reproducing Kernel Hilbert Space (RKHS) induced by the employed kernel \cref{eq:gpKernel} of the corresponding GP.
    Furthermore, its RKHS norm is bounded as $\parallel \gpFunToLearn{i}(\gpInputData) \parallel _{\kernel} \leq B_i < \infty$ for some $B_i > 0$.
    The observation noise of each GP is independent and uniformly bounded by $\sigma_n$.
    \label{asm:RKHSNorm}
\end{asm}
Under \cref{asm:RKHSNorm}, \cite[Thm. 6]{srinivasInformationTheoreticRegretBounds2012} guarantees that for some $\stabProb{i} \in (0, 1)$, mean predictions are bounded as
\begin{equation} \label{eq:meanBound}
        \mathrm{Pr} \{ \forall \gpInputData \in \mathcal{A}, |\gpMean{i}(\gpInputData) - \gpFunToLearn{i}(\gpInputData) | \leq \stabBetaSqrt{i} \gpStddev{i}(\gpInputData) \} \geq 1 - \stabProb{i} ,
    \end{equation} 
where $\mathcal{A}$ is compact, and $\beta$ is computed based on the properties of the GP and the training data.

\secVspace
\subsection{Safety Filter}
\secVspace
\label{sec:safetyFilter} 
We combine the learned flat-input transformation with the FMPC results in a safety filter consisting of four components: probabilistic feedback linearization, a probabilistic Lyapunov decrease constraint, a probabilistic state constraint and input constraints. We then show that this safety filter is an SOCP that finds the optimal $\regExtInput$.
To limit the search space of the optimization and to enable the stability constraint bounds, we impose $\regExtInput^{min} \leq \regExtInput \leq \regExtInput^{max}$. 
\subsubsection{Probabilistic Feedback Linearization}
From FMPC we get the desired optimal flat states and inputs $\flatStateOpt$ and $\flatInputOpt$. To find $\regDisExtInp$, we minimize the distance between the expected GP prediction and  $\flatInputOpt$
\begin{equation}
    \min_{\regDisExtInp} \; \mathbb{E} [\parallel \gpFunToLearnVec(\flatStateOpt, \regDisExtInp) - \flatInputOpt \parallel ^2 ], 
    \label{eq:fbLinOpt}
\end{equation}
where $\mathbb{E}$ is the expected value and $\gpFunToLearnVec = [\gpFunToLearn{1}, ..., \gpFunToLearn{m}]^\tran$ is the GP posterior distribution of each dimension summarized in one vector. 
%
Similar to \cite{hallDifferentiallyFlatLearningBased2023}, we use the definitions of the GP mean and covariance from \cref{eq:gpMeanDefGammas,eq:gpCovDefGammas} and leave out all terms independent of $\regDisExtInp$ to get 
\begin{equation}
\begin{aligned}
    \min_{\regDisExtInp} \, \left(\sum_{i=1}^{m} 2 (\gamOneOpt{i} - \flatInpOptComp{i}) \gamTwoOpt{i}^\tran + \gamFourOpt{i}^\tran \right) \regDisExtInp \\
    + \regDisExtInp^\tran \left(\sum_{i=1}^{m} {\gamTwoOpt{i}} \, \gamTwoOpt{i}^\tran + \gamFiveOpt{i} \right) \regDisExtInp, 
    \label{eq:probabilisticFBLin}
\end{aligned}
\end{equation}
using $\prescript{i}{}{\gamma_j}^{\hspace{-0.25em}*} \coloneqq \prescript{i}{}{\gamma_j} (\flatStateOpt) $.
The solution to this optimization finds the $\regDisExtInp$ that, when mapped through the learned nonlinear input transformation \cref{eq:PsiLinInU}, most closely matches the optimal flat input from the FMPC $\flatInputOpt$.

%
\subsubsection{Probabilistic Stability Constraint}
Using the Lyapunov decrease condition \cref{eq:ljapunovDecreasing} we formulate a probabilistic stability constraint to ensure that the input $\regDisInput$ guarantees Lyapunov decrease of the closed loop system with learned dynamics.
This will ensure that, as long as the optimization remains feasible, the tracking error will decrease.
We consider a nominal controller $\flatNomInput = - \lqrGain \trackingError{k} + \flatDisInputRef{k}$ with gain matrix $\lqrGain$ from \cref{eq:equivalentGainMPC} and a Lyapunov function $\ljapunovFun{\trackingError{k}} = \trackingError{k}^\tran \stabPmat \trackingError{k}$ with $\stabPmat$ from \cref{rem:MPC_Ricatti_equivalence}. With the nominal controller the error dynamics become $\trackingError{k+1} = \left( \dynAdis - \dynBdis \lqrGain \right) \trackingError{k} + \dynBdis \left( \gpFunToLearnVec (\flatStateOpt, \regDisExtInp) - \flatNomInput \right)$. Inserting the error dynamics into the Lyapunov decrease condition in \cref{eq:ljapunovDecreasing} results in $\stabWone^\tran \left(\gpFunToLearnVec (\flatStateOpt, \regDisExtInp) - \flatNomInput \right) + \left( \gpFunToLearnVec (\flatStateOpt, \regDisExtInp) - \flatNomInput \right)^\tran \stabWtwoMat \left( \gpFunToLearnVec (\flatStateOpt, \regDisExtInp) - \flatNomInput \right) \leq \stabWthree$
which can be written as 
\begin{align}
    &\left(\gpFunToLearnVec (\flatStateOpt, \regDisExtInp) - \flatNomInput + \stabWfour \right) ^\tran \stabWtwoMat \left( \gpFunToLearnVec (\flatStateOpt, \regDisExtInp) - \flatNomInput + \stabWfour \right) \\&\leq \stabWthree + \frac{1}{4} \stabWone^\tran \stabWtwoMat^{-1} \stabWone,\label{eq:stabCondW}
\end{align} 
where $
{\stabWone^\tran = 2 \trackingError{k}^\tran \left(\dynAdis - \dynBdis \lqrGain \right)^\tran \stabPmat \dynBdis}, 
{\stabWtwoMat = \dynBdis^\tran \stabPmat \dynBdis}, \\
{\stabWthree = \trackingError{k}^\tran \left( \stabPmat - \left(\dynAdis - \dynBdis \lqrGain \right)^\tran \stabPmat \left(\dynAdis - \dynBdis \lqrGain \right) \right) \trackingError{k} - \epsilon},\\
{\stabWfour = \frac{1}{2} \stabWtwoMat^{-1} \stabWone},
$
and $\epsilon > 0$ is a small constant to allow the inequality to be non-strict. 

\begin{thm}
    Given \cref{eq:meanBound,rem:MPC_Ricatti_equivalence} the Lyapunov decrease condition \cref{eq:stabCondW} holds when
    \begin{align}
        &\sum_{i=1}^{m} \stabWtwoMatComp{i} \left( \gpMean{i}(\flatStateOpt, \regDisExtInp) - \flatNomInpComp{i} + \stabWfourComp{i} \right)^2 + \lipschitzConst{i} \stabBetaSqrt{i} \gpStddev{i}(\flatStateOpt, \regDisExtInp)\\
        &\leq \stabWthree + \frac{1}{4} \stabWone^\tran \stabWtwoMat^{-1} \stabWone, \label{eq:stabConstBounded} 
    \end{align}    
     with probability  $\delta_s = 1 - \sum_{i=1}^{m}\delta_i$. Here $\gpMean{i}$ and $\gpCov{i}$ are the GP mean and covariance prediction of $\gpFunToLearn{i}$, 
     $\stabWtwoMatComp{i}$ is the diagonal component $i$ of $\stabWtwoMat$ 
     and $\lipschitzConst{i}$ is given by
     \begin{align}
       \lipschitzConst{i} = \max_{\mathclap{\vec{s} \in [\regDisExtInp^{min}, \regDisExtInp^{max}]}} \, 2 \stabWtwoMatComp{i} ( | \gpMean{i}(\flatStateOpt, \vec{s}) - \flatNomInpComp{i} + \stabWfourComp{i} | 
       + \bar{\rho} \gpStddev{i}(\flatStateOpt, \vec{s}) ), 
        \label{eq:lipschizLValue}
    \end{align}
    where $\bar{\rho}$ is the Gaussian quantile function at probability $\bar{\delta}$.
\end{thm}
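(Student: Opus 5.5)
The plan is to show that the left-hand side of \cref{eq:stabCondW}, evaluated at the true transformation $\gpFunToLearnVec$, is bounded above by the left-hand side of \cref{eq:stabConstBounded} on a high-probability event. The first step decouples the quadratic form using \cref{rem:MPC_Ricatti_equivalence}. Since $\stabPmat$ is block diagonal, and $\dynBdis = \diag(\dynBdisComp{1},\dots,\dynBdisComp{m})$ has the matching block structure, the matrix $\stabWtwoMat = \dynBdis^\tran \stabPmat \dynBdis$ is diagonal with entries $\stabWtwoMatComp{i} = \dynBdisComp{i}^\tran \stabPmat_i \dynBdisComp{i} > 0$. Hence the left-hand side of \cref{eq:stabCondW} equals $\sum_{i} \stabFunComp(\gpFunToLearn{i}(\flatStateOpt,\regDisExtInp))$, where
\begin{equation}
\stabFunComp(x) = \stabWtwoMatComp{i}\,(x - \flatNomInpComp{i} + \stabWfourComp{i})^2 .
\end{equation}
The problem therefore reduces to $m$ scalar problems.

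The second step is a Lipschitz (mean value) argument for each scalar $\stabFunComp$. For any $x$ and $\mu$ we have $\stabFunComp(x) \le \stabFunComp(\mu) + \sup_{\xi} |\stabFunComp'(\xi)|\,|x-\mu|$, where $\xi$ ranges over the segment between $\mu$ and $x$, and $|\stabFunComp'(\xi)| = 2\stabWtwoMatComp{i}|\xi - \flatNomInpComp{i} + \stabWfourComp{i}|$. Set $\mu = \gpMean{i}(\flatStateOpt,\regDisExtInp)$ and $x = \gpFunToLearn{i}(\flatStateOpt,\regDisExtInp)$. By the triangle inequality,
\begin{equation}
|\xi - \flatNomInpComp{i} + \stabWfourComp{i}| \le |\gpMean{i} - \flatNomInpComp{i} + \stabWfourComp{i}| + |\gpFunToLearn{i} - \gpMean{i}| .
\end{equation}
On the event $|\gpFunToLearn{i} - \gpMean{i}| \le \bar\rho\, \gpStddev{i}$, the derivative bound is at most the bracketed expression in \cref{eq:lipschizLValue} at $\vec{s} = \regDisExtInp$. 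Because the safety filter imposes $\regExtInput^{min} \le \regDisExtInp \le \regExtInput^{max}$, this quantity is in turn at most $\lipschitzConst{i}$. The maximum in \cref{eq:lipschizLValue} exists because $\gpMean{i}$ and $\gpStddev{i}$ are continuous (by \cref{eq:gpMeanDefGammas,eq:gpCovDefGammas}) and the box is compact. This is what makes $\lipschitzConst{i}$ independent of the decision variable.

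The third step bounds the remaining factor $|\gpFunToLearn{i} - \gpMean{i}|$ by \cref{eq:meanBound}: with probability at least $1-\stabProb{i}$ it is at most $\stabBetaSqrt{i}\gpStddev{i}$, uniformly over the compact set $\mathcal{A}$. I would take $\mathcal{A}$ to contain $\{\flatStateOpt\}\times[\regExtInput^{min},\regExtInput^{max}]$. Combining the three steps gives
\begin{equation}
\stabFunComp(\gpFunToLearn{i}) \le \stabWtwoMatComp{i}(\gpMean{i} - \flatNomInpComp{i} + \stabWfourComp{i})^2 + \lipschitzConst{i}\stabBetaSqrt{i}\gpStddev{i} .
\end{equation}
Summing over $i$ and using \cref{eq:stabConstBounded} then yields \cref{eq:stabCondW}. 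A union bound over the $m$ independent GP error events gives probability at least $1 - \sum_i \stabProb{i} = \delta_s$.

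I expect the main obstacle to be reconciling the two confidence levels: the $\bar\rho$ event used inside the Lipschitz constant and the $\stabBetaSqrt{i}$ event from \cref{eq:meanBound}. To avoid an extra failure probability $\bar\delta$ in the union bound, I would first try choosing the quantile so that $\bar\rho \ge \stabBetaSqrt{i}$. The event in \cref{eq:meanBound} then implies the $\bar\rho$ event, and the only failure probabilities are the $\stabProb{i}$, consistent with $\delta_s$ as stated. If that choice is not admissible, the fallback is to include the corresponding $\bar\delta$ terms in the union bound and state the resulting slightly weaker probability.
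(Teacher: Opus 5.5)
Your proposal is correct and follows essentially the same route as the paper's proof: diagonalize $\mat{W}_2$ via the matching block structures of $\mat{P}$ and $\mat{B}_d$, bound each scalar quadratic by a Lipschitz argument with constant $L_i$, invoke the RKHS mean bound, and combine the $m$ events with Boole's inequality. Your choice $\bar{\rho} \geq \beta_i^{1/2}$, which makes the RKHS event imply the $\bar{\rho}$-event, is cleaner than the paper's treatment: the paper only asserts that the Lipschitz bound holds ``with high probability $\bar{\delta}$'' and then leaves that failure probability out of $\delta_s$ (and since each $g_i$ is exactly quadratic, even $\bar{\rho} \geq \tfrac{1}{2}\beta_i^{1/2}$ would suffice).
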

\begin{proof}
    From \cref{rem:MPC_Ricatti_equivalence} $\stabPmat$ is a positive definite block-diagonal matrix. Combined with the block structure of $\dynBdis$ in \cref{eq:flatSysDiscretized} $\stabWtwoMat$ is a diagonal matrix $\stabWtwoMat = \diag ( \stabWtwoMatComp{1}, ..., \stabWtwoMatComp{m})$. Thus we can write the left hand side of \cref{eq:stabCondW} component wise
    \begin{equation}
        \sum_{i=1}^{m} \underbrace{\stabWtwoMatComp{i} \left( \gpFunToLearn{i} - \flatNomInpComp{i} + \stabWfourComp{i} \right)^2}_{\stabFunComp(\gpFunToLearn{i})}
        \leq \stabWthree + \frac{1}{4} \stabWone^\tran \stabWtwoMat^{-1} \stabWone,
    \end{equation}
\new{where $\stabFunComp(\gpFunToLearn{i})$ is quadratic in $\gpFunToLearn{i}$ (arguments omitted for readability). Since $\regDisExtInp$ is bounded, $\stabFunComp(\gpFunToLearn{i})$ is also bounded and $\stabFunComp$ is Lipschitz continuous with constant $\tilde{\lipschitzConst{i}} = \max_{\gpFunToLearn{i}} | 2 \stabWtwoMatComp{i} \left( \gpFunToLearn{i} - \flatNomInpComp{i} + \stabWfourComp{i} \right) | $, which we bound by $\lipschitzConst{i}$ with high probability $\bar{\delta}$. Applying \cref{eq:meanBound} gives $ |\gpFunToLearn{i} - \gpMean{i}| \leq \stabBetaSqrt{i} \gpStddev{i}$ with probability $1-\delta_i$ and Boole's inequality yields the joint probability $1 - \sum_{i=1}^m \delta_i$.}
\end{proof}

We use the definitions of the GP mean and covariance \cref{eq:gpMeanDefGammas,eq:gpCovDefGammas} to write \cref{eq:stabConstBounded} solely as a function of $\regDisExtInp$
\begingroup
\addtolength{\jot}{-0.8em}
\begin{align}    
    &\sum_{i=1}^{m} \lipschitzConst{i} \stabBetaSqrt{i} \sqrt{\gamThreeOpt{i} + \gamFourOpt{i}^\tran \regDisExtInp + \regDisExtInp^\tran \gamFiveOpt{i}\regDisExtInp} \\
    & + \regDisExtInp^\tran \left( \sum_{i=1}^{m} \stabWtwoMatComp{i} \gamTwoOpt{i} \gamTwoOpt{i}^\tran \right) \regDisExtInp  \label{eq:probStabConstraint}\\
    &\leq 
    \left(\sum_{i=1}^{m} - \stabWtwoMatComp{i}\left( 2 \gamOneOpt{i} + 2(\stabWfourComp{i} - \flatNomInpComp{i}) \right) \gamTwoOpt{i}^\tran \right) \regDisExtInp \\
    &+ \stabWthree + \frac{1}{4} \stabWone^\tran \stabWtwoMat^{-1} \stabWone 
    - \sum_{i=1}^{m} \stabWtwoMatComp{i} \left(\gamOneOpt{i} -\flatNomInpComp{i} + \stabWfourComp{i} \right)^2
\end{align}
\endgroup


%
\subsubsection{Probabilistic State Constraint}
Because the safety filter can modify the flat input, and the learned mapping $\gpFunToLearnVec (\flatStateOpt, \regDisExtInp )$ is uncertain, we need a constraint in the safety filter to ensure $\flatDisState{k+1} \in \mathcal{Z}$ with high probability. 
With the GP predictions and the discrete-time linearized dynamics \cref{eq:flatSysDiscretized} the mean of the next state becomes
$\meanState{k+1} = \dynAdis \flatStateOpt + \dynBdis [\gpMean{1}(\flatStateOpt, \regDisExtInp ), ...,\gpMean{m}(\flatStateOpt, \regDisExtInp)]^\tran$.
For brevity, we only consider uncertainty in $\meanState{k+1}$ due to $\gpCov{i}, {i \in \{1,...,m\}}$. 
Therefore the uncertainty of the next state is 
$\covState{k+1} = \dynBdis 
    \diag \left( \gpCov{1} (\flatStateOpt, \regDisExtInp), ..., \gpCov{m}(\flatStateOpt, \regDisExtInp) \right)
    \dynBdis^\tran
$.
The off-diagonal terms in the covariance matrix are zero because the components of the flat input vector are assumed to be independent of each other.
Using $\covState{k+1}$ we tighten $\mathcal{Z}$ using probabilistic reachable sets (PRS).
Following the one-step PRS tightening from \cite{hallDifferentiallyFlatLearningBased2023} and \cite{hewingCautiousModelPredictive2020}, the tightened constraint set 
            $\stateTightenedSet (\covState{k+1}) := \{ \flatDisState{k+1} | \stateSelVec^\tran \flatDisState{k+1} \leq \stateLimit - \rho(1-\delta_j) \sqrt{\stateSelVec^\tran \covState{k+1} \stateSelVec} \},$ 
for $j=1,\ldots,n_c$, ensures that under the uncertain dynamics $\meanState{k+1} \in \stateTightenedSet$ with probability level $\delta_c = 1-\sum_{j=1}^{n_c}\delta_j$.
Note $\rho(1-\delta_j)$ is the quantile function of a standard Gaussian random variable at the probability $1-\delta_j$ for some small $\delta_j$.

\label{eq:probStateConstr} 

\subsubsection{Input Constraints}
In addition to the constraints on the extended input $\regExtInput_{min} \leq \regExtInput \leq \regExtInput_{max}$ we consider constraints on the system input $\regInput$. 
This is necessary to account for actuator limits in practice.
To constrain $\regInput$ we use the system extension \cref{eq:discreteExtensionDynamics} to relate $\regExtInput$ to the bounds on $\regInput$ as 
\begin{equation}
    \regInput_{min} \leq \dynExtSelectionMat \left( \dynExtAmatDis \dynExtState_{k-1} + \dynExtBmatDis \regExtInput_k \right) \leq \regInput_{max}
    \label{eq:regInputConstraint}
\end{equation}

%
\subsubsection{Safety Filter as a SOCP}
Here we show that the safety filter takes the form of a SOCP. 
\begin{thm}
    The safety filter consisting of probabilistic feedback linearization \cref{eq:probabilisticFBLin}, a probabilistic Lyapunov decrease constraint \cref{eq:probStabConstraint}, probabilistic state constraint (\cref{eq:probStateConstr}) and input constraints \cref{eq:regInputConstraint} is written as a SOCP with 
    \begin{equation}
    \begin{aligned}
    \min_{\socpOptVar} & 
    \left[\left(\sum_{i=1}^{m} 2 (\gamOneOpt{i} - \flatInpOptComp{i}) \gamTwoOpt{i}^\tran + \gamFourOpt{i}^\tran \right), 1, 0 \right] \socpOptVar \\
    \textrm{s.t.} & \parallel \socA{i} \socpOptVar - \socB{i} \parallel \leq \socC{i} \socpOptVar + \socD{i} \quad i \in \{1, 2, 3, 4,\ldots,n_c\},\\
    &\regInput_{min} \leq \dynExtSelectionMat \left( \dynExtAmatDis \dynExtState_{k-1} + \dynExtBmatDis \regExtInput_k \right) \leq \regInput_{max},\\
    &\regExtInput^{min} \leq \regExtInput \leq \regExtInput^{max},
    \end{aligned}  
    \label{eq:socp_def}
    \end{equation}
    where $\socpOptVar = [\regDisExtInp^\tran, q_1, q_2]^\tran$, and $q_1$ and $q_2$ are auxiliary variables.
    The optimal input $\regDisExtInp$ is within the input bounds and guarantees Lyapunov function decrease and $\flatDisState{k} \in \mathcal{Z}$ with probability of at least $\delta_s + \delta_c - 1$.
\end{thm}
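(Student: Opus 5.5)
The plan is constructive: I will rewrite each component of the safety filter as a linear objective term, a second-order cone constraint, or a linear constraint. The probabilistic claim then follows by combining the two earlier probability statements with Boole's inequality.

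\emph{Objective.} The quadratic term of \cref{eq:probabilisticFBLin} has matrix $\mat{M} = \sum_i \gamTwoOpt{i}\gamTwoOpt{i}^\tran + \gamFiveOpt{i}$. This matrix is positive semidefinite: each $\gamTwoOpt{i}\gamTwoOpt{i}^\tran$ is, and each $\gamFiveOpt{i}$ is, because $\regExtInput^\tran\gamFiveOpt{i}\regExtInput$ is the quadratic part of the posterior variance \cref{eq:gpCovDefGammas}, which follows from the positive definiteness of $\kernBeta{j}$ (\cref{asm:kernels}).

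I introduce the epigraph variable $q_1 \geq \regDisExtInp^\tran \mat{M}\regDisExtInp$. Factorising $\mat{M}=\mat{L}^\tran\mat{L}$ (e.g.\ via Cholesky), this is the standard rotated-cone constraint
\begin{equation}
\left\| \left[ 2\mat{L}\regDisExtInp;\; q_1 - 1 \right] \right\| \leq q_1 + 1,
\end{equation}
which gives cone $i=1$. The objective then becomes the linear form stated in \cref{eq:socp_def}.

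\emph{Stability constraint.} In \cref{eq:probStabConstraint}, each square-root term equals $\|\choleskyGamFive{i}[1;\regDisExtInp]\|$, where the factor comes from the PSD matrix $\left[\begin{smallmatrix}\gamThreeOpt{i} & \gamFourOpt{i}^\tran/2\\ \gamFourOpt{i}/2 & \gamFiveOpt{i}\end{smallmatrix}\right]$; this matrix is PSD because it represents a variance. The quadratic term $\regDisExtInp^\tran(\sum_i \stabWtwoMatComp{i}\gamTwoOpt{i}\gamTwoOpt{i}^\tran)\regDisExtInp$ is convex since $\stabWtwoMatComp{i}>0$.

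I bound this quadratic term by the auxiliary variable $q_2$ through a second rotated cone (cone $i=2$). I then need to write $\sum_i L_i\beta_i^{1/2}\|\cdot\|_i + q_2 \leq$ affine as a single cone. The sum of $m$ norms is the obstacle, since a single $\|\socA{}\socpOptVar-\socB{}\|$ cannot represent it exactly. What I will try first is stacking: I use the triangle-type bound $\sum_i L_i\beta_i^{1/2}\sigma_i \le \sqrt{m}\,\|(L_i\beta_i^{1/2}\sigma_i)_i\|$. This is conservative, so feasibility of the cone still implies \cref{eq:probStabConstraint}. Alternatively, I would introduce one auxiliary per norm, absorbed into the index set.

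\emph{State constraints.} Each tightened half-space reads
\begin{equation}
\rho(1-\delta_j)\sqrt{\stateSelVec^\tran \covState{k+1}\stateSelVec} \leq \stateLimit - \stateSelVec^\tran \meanState{k+1}.
\end{equation}
Here $\stateSelVec^\tran \covState{k+1}\stateSelVec = \sum_i (\stateSelVec^\tran \dynBdisComp{i})^2 \gpCov{i}$, which is a nonnegatively weighted sum of the PSD quadratics $\gpCov{i}$ in $[1;\regDisExtInp]$, so its square root is a Euclidean norm of an affine map. The right-hand side is affine in $\regDisExtInp$ by \cref{eq:gpMeanDefGammas}. This gives the cones $i=3,\dots$, one per $j$.

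\emph{Linear constraints.} The input constraints \cref{eq:regInputConstraint} and the box on $\regExtInput$ are linear in $\regDisExtInp$, since $\dynExtState_{k-1}$ is known. Hence the whole problem is an SOCP, and its optimum automatically satisfies both input bounds.

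\emph{Probability.} By the preceding theorem, feasibility of \cref{eq:stabConstBounded} implies Lyapunov decrease on an event $E_s$ with $\Pr(E_s)\ge\delta_s$. By the PRS tightening, constraint satisfaction holds on an event $E_c$ with $\Pr(E_c)\ge\delta_c$. Boole's inequality then gives
\begin{equation}
\Pr(E_s\cap E_c)\ge 1-(1-\delta_s)-(1-\delta_c) = \delta_s+\delta_c-1,
\end{equation}
which is the claimed probability.
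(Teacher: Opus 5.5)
Your proof is correct. It shares the paper's skeleton: epigraph variables $q_1$ and $q_2$ with rotated cones for the two quadratic terms, each GP standard deviation written as the norm of an affine map of $\regDisExtInp$, one cone per tightened half-space, linear input constraints, and Boole's inequality giving $\delta_s+\delta_c-1$. It differs from the paper in two technical steps, and in both yours is the more careful version.

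\emph{Standard deviations.} The paper completes the square with $(\choleskyGamFive{i})^{-1}$, which requires $\gamFiveOpt{i}\succ 0$. It then splits the constant $\gamThreeOpt{i}-\tfrac14\|(\choleskyGamFive{i})^{-1}\gamFourOpt{i}\|^2$ into $m$ entries $\gamThreeOpt{i}/m-(\cdot)_j^2$, which can individually be negative, so the paper has to clip them. Factoring the full $(m+1)\times(m+1)$ PSD matrix, as you do, needs neither invertibility nor clipping.

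\emph{Sum of norms in the stability constraint.} The paper stacks all terms $x_i=\lipschitzConst{i}\stabBetaSqrt{i}\gpStddev{i}\ge 0$ together with $q_2$ into a single norm, citing subadditivity of the square root. Subadditivity, however, gives $\sqrt{\sum_i x_i^2+q_2^2}\le\sum_i x_i+q_2$. So that single cone is implied by \cref{eq:probStabConstraint} rather than implying it. Your Cauchy--Schwarz factor $\sqrt{m}$, with $q_2$ moved to the right-hand side, is an inner approximation. Feasibility of your cone therefore does yield \cref{eq:stabConstBounded} and hence the probabilistic decrease. The cost is conservatism of up to a factor $\sqrt{m}$ on the uncertainty term.

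Your per-norm epigraph alternative is exact, but it enlarges $\socpOptVar$ beyond $[\regDisExtInp^\tran,q_1,q_2]^\tran$; the $\sqrt{m}$ version is the one that matches the theorem as stated. Only the bookkeeping is off: the stability cone should take index $3$ and the state-constraint cones $4,\dots$.
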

\ah{add lower input bounds in code if needed.}
\begin{proof}
First, we introduce $q_1$ to be able to write the cost function as a linear function in $\socpOptVar$. Take $q_1 \geq \regDisExtInp^\tran \left(\sum_{i=1}^{m} {\gamTwoOpt{i}} \, \gamTwoOpt{i}^\tran + \gamFiveOpt{i} \right) \regDisExtInp$. Using the same steps as in our previous work \cite{hallDifferentiallyFlatLearningBased2023} we can reformulate this as
    $(1 + q_1)^2 \geq 4 \regDisExtInp^\tran \left(\sum_{i=1}^{m} {\gamTwoOpt{i}} \, \gamTwoOpt{i}^\tran + \gamFiveOpt{i} \right) \regDisExtInp + (1-q_1)^2$ which can be written as a standard SOC constraint using the Cholesky decomposition of $ \gamFiveOpt{i} = \choleskyGamFive{i} \choleskyGamFive{i}^\tran$ to give expressions for $\socA{1}$, $\socB{1}$, $\socC{1}$ and $\socD{1}$.
Next, we convert the stability constraint \cref{eq:probStabConstraint} as two SOC constraints using $q_2$. We write $\regDisExtInp^\tran \left( \sum_{i=1}^{m} \stabWtwoMatComp{i} \gamTwoOpt{i} \gamTwoOpt{i}^\tran \right) \regDisExtInp \leq q_2$ as a SOC constraint with $\socB{2} = [0, ..., 0, 1]^\tran$, $\socC{2} = [\vec{0}^\tran, 0, 1]$, $\socD{2}=1$, and $\socA{2}$ is a matrix of zeros with the first $m$ columns as $\text{col}\left(\sqrt{\stabWtwoMatComp{1}} \gamTwoOpt{1}^\tran, \ldots, \sqrt{\stabWtwoMatComp{m}} \gamTwoOpt{m}^\tran, \vec{0}^T \right)$, and whose last entry is \num{-1}.
Using the subadditivity of the square root, the SOC constraint for \cref{eq:probStabConstraint} with $\socA{3}$ as a zero matrix with the first few columns as $\text{col}\left( \lipschitzConst{1} \stabBetaSqrt{1} \left[\choleskyGamFive{1}, \mat{0} \right]^T,\ldots, \lipschitzConst{m} \stabBetaSqrt{m} \left[ \choleskyGamFive{m}, \mat{0} \right]^T, \mat{0} \right)$ and the last entry as \num{1}; with $\socB{3} = \text{col}(\socBComp{3}{1},\ldots,\socBComp{3}{m},0)$ where
\begin{align}
    &\socBComp{3}{i} = \lipschitzConst{i} \stabBetaSqrt{i} \begin{bmatrix}
            - \frac{1}{2} \choleskyGamFive{i}\strut^{-1} \gamFourOpt{i}\\
            \sqrt{\frac{\gamThreeOpt{i}}{m} - \prescript{}{1}{(\frac{1}{2} \choleskyGamFive{i}\strut^{-1} \gamFourOpt{i})^2}} \\
            \smallvdots\\
            \sqrt{\frac{\gamThreeOpt{i}}{m} - \prescript{}{m}{(\frac{1}{2} \choleskyGamFive{i}\strut^{-1} \gamFourOpt{i})^2}}
        \end{bmatrix}
        = \lipschitzConst{i} \stabBetaSqrt{i} \tilde{\vec{b}_i}
\end{align}
\begin{align}
    \socC{3} &= \begin{bmatrix}
        \left(\sum_{i=1}^{m} - \stabWtwoMatComp{i}\left( 2 \gamOneOpt{i} + 2(\stabWfourComp{i} - \flatNomInpComp{i}) \right) \gamTwoOpt{i}^\tran \right) & 0 & 0
    \end{bmatrix} \\
 \socD{3} &= \stabWthree + \frac{1}{4} \stabWone^\tran \stabWtwoMat^{-1} \stabWone 
    - \sum_{i=1}^{m} \stabWtwoMatComp{i} \left(\gamOneOpt{i} -\flatNomInpComp{i} + \stabWfourComp{i} \right)^2
   \end{align}
where $\prescript{}{i}{(\cdot)^2}$ is the square of component $i$ of $(\cdot)$. We bound the terms in $\socBComp{3}{i}$ such that they remain real. 

%
For the probabilistic state constraint in \cref{eq:probStateConstr} we split up $\stateSelVec$ into components of the same size as the block vectors of $\dynBdis$ such that the constraint becomes $\stateSelVec^\tran \meanState{k+1} \leq \stateLimit - \sqrt{\sum_{i=1}^{m} \stateW{i}^2 \gpCov{i}(\flatStateOpt, \regDisExtInp)}$ where $\stateW{i} = \rho(\delta) \sqrt{\stateSelVecPart{i}^\tran \dynBdisComp{i} \dynBdisComp{i}^\tran  \stateSelVecPart{i}}$. This has the same form as the probabilistic stability constraint without auxiliary variables and can therefore be written as a SOC constraint using the same procedure as in \cite{hallDifferentiallyFlatLearningBased2023}, to give expressions for the SOC constraints $i=4,\ldots,n_c$. 
Applying Boole’s inequality yields a joint satisfaction probability of at least $\delta_s + \delta_c -1$ for both the Lyapunov decrease and flat-state constraints when applying $\regDisExtInp$ to \eqref{eq:dynamics}.
\end{proof}
\begin{rem}
Because the Lyapunov function decreases with high probability between successive sampling instants, if the problem remains feasible then $\trackingError{k} \rightarrow 0$ as $k \rightarrow \infty$. 
Although our formulation doesn't theoretically guarantee recursive feasibility, we demonstrate in \cref{sec:experiments} that, practically, feasibility is often maintained.
\end{rem}

%
\secVspace
\section{Simulation and Hardware Experiments}
\secVspace
\label{sec:experiments}
We evaluate our control approach in both simulation and hardware on an unconstrained and state/input-constrained lemniscate trajectory tracking task. To highlight our performance relative to idealized dynamics, we compare our approach with perfect-knowledge FMPC and NMPC; we also compare to GPMPC \cite{hewingCautiousModelPredictive2020, yuanSafeControlGymUnifiedBenchmark2022}  as a marker of state-of-the-art learning-based control. All code is available at \href{https://github.com/utiasDSL/mimo_fmpc_socp}{github.com/utiasDSL/mimo\textunderscore fmpc\textunderscore socp}. 

The 2D quadrotor model has state $\regState = [x, \dot{x}, z, \dot{z}, \theta, \dot{\theta}]^\tran$ and input $\regInput = [T_c, \theta_c]^\tran$. Here $(x, y)$ is the position, $\theta$ is the roll angle, $T_c$ is the thrust, and $\theta_c$ is the commanded roll angle. The system dynamics are $\ddot{x} =  \sin{\theta}(\beta_2 + \beta_1 \, T_c)$, $\ddot{z} = \cos{\theta}(\beta_2  + \beta_1 \, T_c) - g$ and $\ddot{\theta} = \alpha_1 \theta + \alpha_2 \dot{\theta} + \alpha_3 \theta_c$.
We used a Bitcraze Crazyflie 2.1 as our hardware platform; its identified model parameters can be found in our code.
This system is flat with flat output $\flatOutput = (x, z)^\tran$, and the flat state vector is $\flatState = [x, \dot{x}, \ddot{x}, \dotThree{x}, z, \dot{z}, \ddot{z}, \dotThree{z}]^\tran$. As the system has non-full relative degree, we extend the system with a double integrator on $T_c$ and the extended input becomes $\regExtInput = [ \ddot{T_c}, \theta_c]^\tran$.

\vspace{-1mm}
\subsection{Simulation Experiment and Results}
All controllers were tuned to maximize performance while maintaining reasonable computational efficiency.
All controllers are run at \SI{100}{\hertz} with a horizon of \num{50} using \texttt{ACADOS} \cite{acados}, \new{ a state-of-the-art solver for embedded optimal control, ensuring a fair comparison across formulations}. 
The gains $\mpcQ$ and $\mpcR$ for each method were kept as similar as possible while optimizing individual performance. 
All GPs used squared exponential kernels and were trained offline on 600 data points.
For GPMPC, two GPs were used to model error residuals to account for incorrect prior parameters. 
The first GP corrects $\beta_2 + \beta_1 T_c$ and the second corrects $\ddot{\theta}$.
To keep GPMPC tractable, sparse GP inference with \num{75} inducing points was used.
For our FMPC+SOCP method, the points $\{\flatState_k, \regExtInput_k, \flatInput_k \}$ were sampled around the reference.
In both the constrained and unconstrained tasks, the simulation was run \num{30} times starting at a randomized state within a small ball around the reference. \new{On held-out data, all GPs achieved relative RMSE below 2\% of the output range, with over 96\% of test points falling within the 2-$\sigma$ confidence bounds, except for the first flat input component (86\%), which is stiff and more difficult to calibrate.}


\begin{figure*}[t]
  \vspace{-2mm}
  \centering
	\subfloat[Simulation]{\centering
	    \centering
	    \includegraphics[width = 0.22 \linewidth]{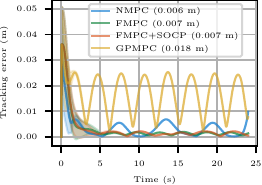}
	    \label{fig:tracking_error_fast}
    }
    \subfloat[Hardware]{\centering
        \centering
	    \includegraphics[width = 0.22 \linewidth]{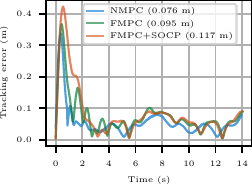}
        \label{fig:tracking_error_real}
     }
     \subfloat[Simulation]{\centering
	    \includegraphics[width = 0.22 \linewidth]{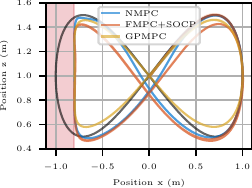}
	    \label{fig:traj_constrained}}
    \subfloat[Hardware]{\centering
	    \includegraphics[width = 0.22 \linewidth]{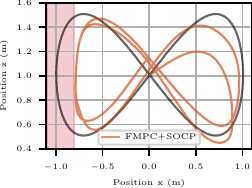}
	    \label{fig:traj_constrained_hardware}}
     
   \caption{Comparisons of the position tracking errors on the unconstrained flat state tracking task for the (a) simulated and (b) hardware show that our formulation is very competitive with the controllers using the true dynamics.
   When including constraints, we can see the performance is still competitive in (c) simulation, and that on (d) hardware our controller still successfully respects the constraint.
   All simulation plots show the mean (solid) and 95\% confidence interval (shaded) across all the initial conditions. The red shaded region represents the flat state constraint.
   \new{The root-mean-square error of each controller in the unconstrained case is given in the legend inside the parenthesis.}
   }
  \label{fig:fast}
  \vspace{-5mm}
\end{figure*}

\begin{figure}[t]
	 \centering
        \includegraphics[trim={0 2mm 0 2mm},clip]{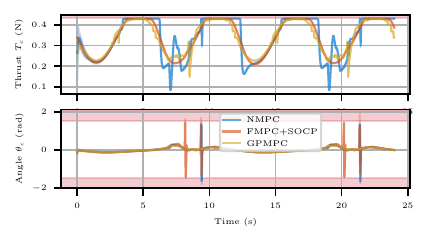}
	    
   \vspace{-3mm}
  \caption{The mean input (solid) and 95\% confidence interval (shaded) across all the initial conditions for the constrained simulation task. The red shaded region represents the constraint.}
  \label{fig:inputs_constrained}
\end{figure}

In \cref{fig:tracking_error_fast}, we show a comparison of the tracking error without constraints. After a transient phase, our approach with purely learned dynamics achieves tracking performance similar to FMPC and slightly better than NMPC and GPMPC.  
The observed transient error arises from the limits on $\regExtInput$, which restricts rapid thrust changes. These are even more significant when starting from hover, as seen in the hardware comparison in \cref{fig:tracking_error_real}.

In \cref{fig:traj_constrained}, we demonstrate the behavior under constraints on the flat state $x$ and on the input $T_c$. Our approach anticipates the state constraint on the FMPC horizon and thus satisfies it with minimal tracking error. 
\new{In contrast, the input constraint on $T_c$ is enforced in the single-step safety filter (\cref{fig:inputs_constrained}), so it cannot be anticipated. This can cause a slight overshoot bounded by $\Delta T_c = ({\dot{T}_{c, 0}}^2)/({2 \ddot{T}_{c, max}})$, where $\dot{T}_{c, 0}$ is the thrust derivative when the constraint becomes active and $\ddot{T}_{c, max}$ is the bound on the corresponding component of $\regDisExtInp$; we tighten the constraint by this amount. Minor residual violations can also arise from \texttt{ACADOS} solver instabilities, as observed for both NMPC and our method.}

\begin{figure}[t]
\vspace{-3mm}
	 \centering
        \includegraphics{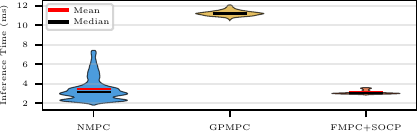}
	    
   \vspace{-3mm}
  \caption{Violin plot of the average inference times on the constrained task. Outliers were filtered using \num{1.5} times the interquartile range.
  }
  \label{fig:timings}
\end{figure}

The distribution of inference times on the constrained task of each approach are shown in \cref{fig:timings}.
We see that our method averages \SI{3.1}{\milli \second}, which is less than \SI{3.4}{\milli \second} for NMPC and more than three and a half times faster than \SI{11.2}{\milli \second} for GPMPC. 
The larger variance in NMPC and GPMPC comes from the nonlinear optimizations that sometimes---particularly near constraint boundaries, far away from warm-start solutions, or in uncertain regions---require significant compute.
These results confirm that our learning-based flat control delivers significant computational benefits without sacrificing performance, even when accounting for dynamics uncertainty and constraints.
\subsection{Hardware Experiments and Results} 
Hardware experiments were performed using a Bitcraze Crazyflie 2.1 quadrotor.
A Vicon motion capture system with an extended Kalman filter was used to estimate the system state $\regState$ and flat state $\flatState$ assuming perfect system knowledge.
The GPs trained in simulation were used for the hardware experiments---training the GPs on real-world data is left for future work. 
The controllers were run at \SI{50}{\hertz} on a standard laptop from a base station and telemetry was wirelessly sent to the drone.
Instead of starting near the reference as in the simulation tasks, the drone started from hover.

For the unconstrained experiment, the tracking error in \cref{fig:tracking_error_real} indicates that our approach is competitive with the NMPC and FMPC.
On the constrained tracking task, our proposed approach respects all constraint boundaries and remains feasible for all time (\cref{fig:traj_constrained_hardware}). Due to model mismatch, state-estimation noise, and the use of GPs trained solely in simulation, the tracking errors are larger than in simulation for all approaches. 
\new{Offset-free MPC techniques~\cite{MORARI20122059} could further reduce steady-state bias in hardware and are left for future work.}



%
\secVspace
\section{Conclusion}
\secVspace
%
We introduced a computationally efficient learning-based control method for nonlinear multi-input, control-affine differentially flat systems that guarantees probabilistic Lyapunov decrease and probabilistic half-space flat state constraint satisfaction and respects input constraints.
While recursive feasibility is not yet theoretically guaranteed, we show through simulation and hardware experimentation that our controller remains practically feasible, stable, and respects constraints. 
\new{Scalability is aided by the convexity and block-diagonal decoupling of both subproblems, and per-dimension GP inference.}
Overall, our experimental results show that our flatness-exploiting learning-based controller is significantly more efficient than GPMPC, the gold-standard learning-based controller, and doesn't compromise on performance.
This work lays a strong foundation for future work on learning-based FMPC with theoretical recursive feasibility and asymptotic stability under constraints.




%

\bibliographystyle{IEEEtran}
\bibliography{IEEEabrv,Paper-MA}
\end{document}

\textcolor{blue}{[COMMENT: Abstract/Title: Soften novelty claims; emphasize practical validation and efficient convex pipeline. Add one line on Crazyflie results (comparable to NMPC with true model).]}